\documentclass[letterpaper, 10 pt, conference]{ieeeconf}  

\IEEEoverridecommandlockouts                              

\usepackage{amsmath} 
\usepackage{amssymb}  
\usepackage{graphicx}

\usepackage{algorithm}
\usepackage{algpseudocode}

\usepackage{amsthm}
\usepackage{hyperref}
\usepackage{xcolor}
\usepackage[noadjust]{cite}
\usepackage{array}

\usepackage[english]{babel}
\newtheorem{theorem}{Theorem}

\newtheorem{lemma}[theorem]{Lemma}
\newtheorem{remark}{Remark}
\newtheorem{proposition}{Proposition}
\newtheorem{assumption}{Assumption}

\newcommand{\Pt}{\widetilde{P}}
\newcommand{\Ph}{\widehat{P}}

\newcommand{\Eh}{\widehat{E}}
\newcommand{\Fh}{\widehat{F}}
\newcommand{\Gh}{\widehat{G}}
\newcommand{\Eb}{\bar{E}}
\newcommand{\Fb}{\bar{F}}
\newcommand{\Gb}{\bar{G}}
\newcommand{\T}{^{\!\top}}

\title{\LARGE \bf
Shifted-Chart Composition for Horizon-Optimal Time-Varying LQR
}

\author{Viswas Dinesh and Spandan Roy
\thanks{Viswas Dinesh and Spandan Roy are with Robotics Research Centre,
        IIIT Hyderabad, India. Emails:
        {\tt\small viswas.dinesh@research.iiit.ac.in, spandan.roy@iiit.ac.in}}
}

\makeatletter
\def\ps@IEEEtitlepagestyle{%
    \def\@oddhead{%
        \hfil{\scriptsize\thepage}%
    }%
    \let\@evenhead\@oddhead
    \def\@oddfoot{%
        \parbox[t]{\textwidth}{\centering\scriptsize
            Submitted to the 2027 American Control Conference (ACC); under review.\\
            This work has been submitted to the IEEE for possible publication.
            Copyright may be transferred without notice, after which this version
            may no longer be accessible.}%
    }%
    \let\@evenfoot\@oddfoot
}

\makeatother

\begin{document}

\maketitle
\thispagestyle{IEEEtitlepagestyle}
\pagestyle{empty}

\begin{abstract}
Horizon optimization in finite-horizon LQR treats the horizon $T$ as a
decision variable and adds a penalty $wT$ to the cost. Recent
composition-based methods evaluate all candidate horizons in $O(Nn^3)$,
instead of the $O(N^2n^3)$ of one Riccati recursion per horizon, but they
track the inverse cost-to-go $P^{-1}$ and therefore require every stage and
terminal cost to be positive definite, whereas LQR requires only positive
semidefiniteness. This excludes costs that leave some states unpenalized. In this work, we
track instead the shifted inverse $\widehat{P}_k=(P_k+cI)^{-1}$, the inverse
coordinate after a symplectic shear of the state--costate variables. Its
domain contains every positive semidefinite Riccati matrix, and
$P_k=\widehat{P}_k^{-1}-cI$ recovers the Riccati solution. We show that the
shift can be absorbed into the stage coefficients of the inverse-coordinate
composition, so the $O(Nn^3)$ search carries over, exact in exact
arithmetic, subject to invertibility conditions. These hold for all small
$c$ when the dynamics contract the unpenalized directions, fail for at most
finitely many $c$ outside a degenerate case, and are certified per horizon
at run time. In our experiments, diagonal loading $Q_k+\varepsilon I$ either
left cost errors several orders of magnitude above those of the shifted
chart or, for small fixed $\varepsilon$, selected the wrong horizon. The
shifted chart with a fixed default $c$ selected the correct horizon on all
tested instances, reporting long horizons as uncertified in one stress case.
\end{abstract}

\begin{table}[t]
\noindent\textit{Nomenclature}\par\vspace{4pt}
{\small
\noindent\begin{tabular}{@{}l@{\hspace{6pt}}>{\raggedright\arraybackslash}p{0.66\columnwidth}@{}}
\hline
\multicolumn{2}{@{}l}{\emph{Notation}}\\
$(X)\succeq0, (X)\succ0$ & $(\bullet)$ being symmetric positive semidefinite, positive definite matrix\\
\multicolumn{2}{@{}l}{\emph{Horizon-Optimal LQR Parameters}}\\
$x_k,\ u_k$ \qquad \qquad & state and input \\ 
$A_k,\ B_k$ & time-varying dynamics\\
$Q_k \succeq0,\ Q_T \succeq0$ & stage and terminal state costs\\
$R_k \succ 0$ & input cost\\ 
$D_k$ & $B_kR_k^{-1}B_k\T$\\
$T,\ N,\ T^\star$ & horizon, maximum horizon, optimal horizon\\
$w,\ J_T$ & horizon penalty, optimal cost for horizon $T$\\
$\mathcal{N}(Q_k),\ \Pi_k$ & unpenalized subspace, its orthogonal projector\\
\multicolumn{2}{@{}l}{\emph{Coordinates}}\\
$P_k,\ \lambda_k$ & Riccati matrix, costate ($\lambda_k=P_kx_k$)\\
$\mathcal{L}_k$ & state--costate subspace $\operatorname{col}[I,\,P_k]\T$\\
$\Pt_k$ & inverse coordinate $P_k^{-1}$~\cite{hop}\\
$\Ph_k$ & shifted inverse coordinate $(P_k+cI)^{-1}$\\
$c,\ c_0$ & shift, default shift\\
$\varepsilon$ & diagonal loading, $Q_k\mapsto Q_k+\varepsilon I$\\
\multicolumn{2}{@{}l}{\emph{Composition}}\\
$E_k,F_k,G_k$ & inverse-coordinate stage coefficients\\
$\Eh_k,\Fh_k,\Gh_k$ & shifted stage coefficients\\
$E'_k,F'_k,G'_k$ & absorbed coefficients (Lemma~\ref{lem:absorb})\\
$\Eb_k,\Fb_k,\Gb_k$ & composed coefficients of stages $0,\dots,k$\\
$M_k$ & $I-c\Gh_k$, pivot (a)\\
$\phi,\ h_k,\ \psi_k$ & unshift map, shifted stage map, $\psi_k=h_k\circ\phi$\\
\multicolumn{2}{@{}l}{\emph{Certification}}\\
$\tau,\ \mathcal{C}$ & rcond threshold, certified horizons\\
\hline
\end{tabular}}
\end{table}
\section{INTRODUCTION}

Optimal control problems (OCPs) seek a trajectory of states and controls that
minimizes a cost functional subject to the dynamics. Common approaches such as
LQR~\cite{kirk2004optimal}, MPC~\cite{mpc}, iLQR~\cite{ilqr} and
DDP~\cite{ddp} use a fixed planning horizon. In many applications, however, a
shorter horizon reduces computation while a longer horizon lowers the task
cost, which has motivated treating the horizon length as a decision variable
of the OCP~\cite{ELALAMI1998525, Abdelhak01052010, teb}, including recent
schemes that adapt the horizon online from learned horizon
maps~\cite{gonzalez2023adaptive}.

For linear quadratic problems, finite-horizon LQR solves a fixed horizon by a
backward Riccati recursion~\cite{bellman1966dynamic}. A brute-force search
over $T=1,\dots,N$ repeats overlapping portions of this recursion, giving an
$N^2$ dependence on the maximum horizon $N$. For stationary $(A,B,Q,R)$,
\cite{op} exploits the fact that the per-step map does not depend on the
horizon: a single backward pass computes all Riccati matrices, and the cost
for any $T$ is read off by re-indexing. This property fails for time-varying
data, which is the generic case when nonlinear dynamics are linearized along
a trajectory.

A geometric view treats the Riccati matrix as a coordinate on a space of
subspaces. For a fixed horizon, the costate $\lambda_k$ is the gradient of
the cost-to-go $V_k(x)=\tfrac12x\T P_kx$, so along optimal trajectories
$\lambda_k=P_kx_k$. The state--costate pairs satisfying this relation form
the $n$-dimensional subspace
$\mathcal{L}_k=\operatorname{col}[I,\;P_k]^\top\subset\mathbb{R}^{2n}$, the
column space of the $2n\times n$ matrix $[I,\;P_k]^\top$, also called the
graph of $P_k$ (for the algebraic equation, the graph of
an invariant subspace of a symplectic pencil~\cite{sima2010structure}). It is a point of the Grassmannian $\mathrm{Gr}(n,2n)$, the
set of all $n$-dimensional subspaces of $\mathbb{R}^{2n}$. The Riccati
recursion propagates $\mathcal{L}_k$ backward in time through
$\mathrm{Gr}(n,2n)$, and $P_k$ is a local coordinate for it, a \emph{graph
coordinate}; see~\cite{sympletic} for the continuous-time analogue. The
same subspace admits other graph coordinates: when $P_k$ is invertible,
$\mathcal{L}_k=\operatorname{col}[P_k^{-1},\;I]^\top$, the graph of
$\widetilde{P}_k=P_k^{-1}$ over the costate. Under a linear change of the
state--costate variables, a graph coordinate undergoes a matrix linear
fractional transformation (LFT). The composition of~\cite{hop} uses the
inverse coordinate $\widetilde{P}_k$: in it, consecutive time-varying
Riccati maps compose into a single LFT, so all candidate horizons can be
queried after one forward composition pass, at $O(Nn^3)$ cost.

The inverse coordinate requires $Q_k^{-1}$ and $Q_T^{-1}$, whereas LQR
requires only $Q_k,Q_T\succeq0$. Singular costs arise whenever some states
are not penalized. Diagonal loading, $Q_k\mapsto Q_k+\varepsilon I$, restores
invertibility but changes the problem being solved, and with it the optimal
cost, the control law and possibly the selected horizon. Alternatively, lifting the problem over $d$ steps yields a positive definite
cost under uniform $d$-step observability and nonsingular
$A_k$~\cite{sun2023receding}; the shifted coordinate requires neither
assumption and leaves the stage structure unchanged.

The contributions of this work are:
\begin{enumerate}
\item \emph{Shifted inverse coordinate.} We represent the Riccati matrix by
$\widehat{P}_k=(P_k+cI)^{-1}$, the inverse coordinate after a fixed
symplectic shear, whose domain contains every positive semidefinite Riccati
matrix (Proposition~\ref{prop:domain}).
\item \emph{Absorption into the composition.} We show that the coordinate
change can be absorbed into the stage coefficients of the inverse-coordinate
LFT, so that horizon search keeps its $O(Nn^3)$ cost and, in exact
arithmetic, recovers the Riccati solution exactly, provided the invertibility
conditions of Assumption~\ref{as:admissibility} hold. These conditions hold
for all small $c$ when the dynamics contract the unpenalized directions, fail
for at most finitely many $c$ outside a degenerate case, and are certified
per horizon at run time.
\item \emph{Numerical comparison.} On random time-varying instances and
three physical systems with singular $Q_k$, we compare against diagonal
loading, an adaptive-jitter inverse coordinate and the released HOP
implementation, and report where each fails, including a stress case in
which the shifted chart reports some horizons as uncertified.
\end{enumerate}

Section~\ref{sec:formulation} states the problem and reviews the
inverse-coordinate composition, Section~\ref{sec:shifted} develops the
shifted chart and its admissibility, Section~\ref{sec:numerical} reports
the numerical study, and Section~V provides concluding remarks.

\section{Preliminaries and Problem Formulation}
\label{sec:formulation}
\subsection{Horizon-Optimal Time-Varying LQR}

Consider the discrete-time linear time-varying system
\begin{equation}
    x_{k+1} = A_k x_k + B_k u_k,
    \qquad k=0,\ldots,T-1,
\end{equation}
where $x_k \in \mathbb{R}^n$ and $u_k \in \mathbb{R}^m$, and $A_k$, $B_k$ are of compatible dimensions. The horizon $T$ is selected from
\begin{equation}
    T \in \{1,\ldots,N\},
\end{equation}
where $N$ denotes the maximum allowable horizon.

For a candidate horizon $T$, consider the quadratic objective
\begin{equation}
    J_T =
    \frac{1}{2}x_T^\top Q_T x_T
    +
    \frac{1}{2}
    \sum_{k=0}^{T-1}
    \left(
        x_k^\top Q_k x_k
        +
        u_k^\top R_k u_k
    \right)
    + wT,
    \label{eq:tv_lqr_cost}
\end{equation}
where $w \geq 0$ penalizes the horizon length. We assume
\begin{equation}
    R_k \succ 0,
\end{equation}
while $Q_k, Q_T \succeq 0$ need not be invertible; $Q_k,Q_T$ are $n\times n$ and $R_k$ is $m\times m$.

The horizon-optimal LQR problem is then
\begin{equation}
    T^\star
    =
    \underset{1\leq T\leq N}{\arg\min}\; J_T,
    \label{eq:holqr}
\end{equation}
where $J_T$ denotes the optimal value of~\eqref{eq:tv_lqr_cost} for a given candidate horizon $T$ and fixed initial state $x_0$.

\subsection{Finite-Horizon Riccati Recursion}
\label{sec:riccati}

For a fixed candidate horizon $T$, the finite-horizon LQR problem can be solved by dynamic programming. The optimal value function is quadratic,
\begin{equation}
    V_k(x_k)
    =
    \frac{1}{2}x_k^\top P_k x_k,
\end{equation}
with terminal condition
\begin{equation}
    P_T = Q_T.
    \label{eq:terminal_condition}
\end{equation}

The backward Riccati recursion is given by
\begin{align}
    S_k
    &=
    R_k + B_k^\top P_{k+1} B_k,
    \label{eq:riccati_S}
    \\
    K_k
    &=
    S_k^{-1} B_k^\top P_{k+1} A_k,
    \label{eq:riccati_K}
    \\
    P_k
    &=
    Q_k
    + A_k^\top P_{k+1} A_k
    - A_k^\top P_{k+1} B_k K_k.
    \label{eq:riccati}
\end{align}

The corresponding optimal control law is
\begin{equation}
    u_k^\star = -K_k x_k.
\end{equation}

The optimized cost for horizon $T$ is consequently
\begin{equation}
    J_T
    =
    \frac{1}{2}x_0^\top P_0 x_0 + wT.
    \label{eq:optimal_cost}
\end{equation}

The Riccati recursion~\eqref{eq:riccati} does not require $Q_k^{-1}$: since $R_k \succ 0$, the matrix $S_k$ in~\eqref{eq:riccati_S} is positive definite, and if $Q_k, Q_T \succeq 0$, then $P_k \succeq 0$ for all $k$. The fixed-horizon recursion is therefore well defined even when $Q_k$ is singular.

\subsection{Horizon Search and Computational Complexity}

A direct approach to~\eqref{eq:holqr} solves the finite-horizon LQR problem independently for every candidate horizon
$T=1,\ldots,N$. The candidate horizon $T$ requires $T$ backward Riccati steps, resulting in
\begin{equation}
    \sum_{T=1}^{N} T
    =
    \mathcal{O}(N^2)
\end{equation}
Riccati updates. Since each update requires $\mathcal{O}(n^3)$ matrix operations, the overall complexity of the brute-force horizon search is
\begin{equation}
    \mathcal{O}(N^2 n^3).
    \label{eq:bruteforce_complexity}
\end{equation}

\subsection{LFT Composition in the Inverse Coordinate}
\label{sec:lft}

Following HOP~\cite{hop}, define the inverse cost-to-go
\begin{equation}
    \widetilde{P}_k = P_k^{-1}.
\end{equation}

Applying the Woodbury Matrix Identity\cite{woodbury1950inverting} to the Riccati recursion\ref{eq:riccati} gives
\begin{equation}
    \widetilde{P}_k
    =
    E_k
    -
    F_k
    \left(
        \widetilde{P}_{k+1} + G_k
    \right)^{-1}
    F_k^\top,
    \label{eq:inverse_riccati}
\end{equation}
where
\begin{align}
    E_k
    &=
    Q_k^{-1},
    \label{eq:E}
    \\
    F_k
    &=
    Q_k^{-1} A_k^\top,
    \label{eq:F}
    \\
    G_k
    &=
    A_k Q_k^{-1} A_k^\top
    +
    B_k R_k^{-1} B_k^\top.
    \label{eq:G}
\end{align}

Thus, each Riccati step can be represented by the map
\begin{equation}
    \widetilde{g}_k(X)
    =
    E_k
    -
    F_k(X+G_k)^{-1}F_k^\top.
    \label{eq:lft_map}
\end{equation}

The maps in~\eqref{eq:lft_map} are closed under composition. In particular, the composition of successive Riccati maps can be represented using another triple of matrices,
\begin{equation}
    \widetilde{g}_{0:k}(X)
    =
    \bar{E}_k
    -
    \bar{F}_k
    (X+\bar{G}_k)^{-1}
    \bar{F}_k^\top.
    \label{eq:composed_lft}
\end{equation}

Starting from
\begin{equation}
    (\bar{E}_0,\bar{F}_0,\bar{G}_0)
    =
    (E_0,F_0,G_0),
\end{equation}
the composed parameters can be updated recursively as
\begin{align}
    W_k
    &=
    (E_k+\bar{G}_{k-1})^{-1},
    \label{eq:W}
    \\
    \bar{E}_k
    &=
    \bar{E}_{k-1}
    -
    \bar{F}_{k-1}W_k\bar{F}_{k-1}^\top,
    \label{eq:Ebar}
    \\
    \bar{F}_k
    &=
    \bar{F}_{k-1}W_kF_k,
    \label{eq:Fbar}
    \\
    \bar{G}_k
    &=
    G_k-F_k^\top W_kF_k.
    \label{eq:Gbar}
\end{align}

For a candidate horizon $T$, the terminal inverse cost is
\begin{equation}
    \widetilde{P}_T = Q_T^{-1},
\end{equation}
and the initial inverse cost-to-go can be obtained from the composed map as
\begin{equation}
    \widetilde{P}_0^{(T)}
    =
    \widetilde{g}_{0:T-1}
    \left(
        Q_T^{-1}
    \right).
    \label{eq:horizon_query}
\end{equation}

The corresponding horizon cost is recovered as
\begin{equation}
    J_T
    =
    \frac{1}{2}
    x_0^\top
    \left(
        \widetilde{P}_0^{(T)}
    \right)^{-1}
    x_0
    +wT.
    \label{eq:hop_cost}
\end{equation}

The composed maps can be constructed in a single forward pass over the horizon. Consequently, the horizon-dependent Riccati information can be reused rather than recomputed independently for every candidate horizon, giving an overall complexity of
\begin{equation}
    \mathcal{O}(Nn^3),
    \label{eq:hop_complexity}
\end{equation}
compared with the $\mathcal{O}(N^2n^3)$ complexity of the brute-force approach.

The coefficients $(E_k, F_k, G_k)$ and the terminal query require $Q_k^{-1}$ and $Q_T^{-1}$, which excludes singular state costs. The next section introduces a shifted inverse coordinate that removes this requirement while recovering the Riccati matrix exactly.

\section{Shifted-Chart Composition}
\label{sec:shifted}
The inverse coordinate of Section~\ref{sec:lft} requires $Q_k^{-1}$ and $Q_T^{-1}$, although LQR permits $Q_k \succeq 0$. Rather than altering the cost by diagonal loading, we change the coordinate in which the
state-costate subspace is represented, keeping the three-parameter LFT structure that composition requires.

\subsection{Shifted Inverse Coordinate}
\label{sec:shifted_coordinate}
The state-costate subspace associated with a Riccati matrix can be written
as the graph
\begin{equation}
    \mathcal{L}_k
    =
    \operatorname{col}
    \begin{bmatrix}
        I\\
        P_k
    \end{bmatrix}.
    \label{eq:riccati_subspace_shifted}
\end{equation}
When $P_k$ is nonsingular, the same subspace can equivalently be
represented as
\begin{equation}
    \mathcal{L}_k
    =
    \operatorname{col}
    \begin{bmatrix}
        P_k^{-1}\\
        I
    \end{bmatrix},
    \label{eq:inverse_graph}
\end{equation}
which corresponds to describing the subspace as a graph over the costate, $x_k=P_k^{-1}\lambda_k$. Thus, the inverse Riccati coordinate used in Section~\ref{sec:lft} is a coordinate representation of
the same state-costate subspace, but it is defined only when $P_k$ is invertible.

We instead apply a fixed shear to the state-costate variables,
\begin{equation}
    S_c:
    (x,\lambda)
    \mapsto
    (x,\lambda+cx),
    \qquad c>0,
\end{equation}
with matrix representation
\begin{equation}
    S_c
    =
    \begin{bmatrix}
        I & 0\\
        cI & I
    \end{bmatrix}.
    \label{eq:shear_matrix}
\end{equation}
Since $cI$ is symmetric, $S_c$ is symplectic. Its action on the Riccati
graph is
\begin{equation}
    S_c
    \begin{bmatrix}
        I\\
        P_k
    \end{bmatrix}
    =
    \begin{bmatrix}
        I\\
        P_k+cI
    \end{bmatrix}.
    \label{eq:sheared_graph}
\end{equation}
Taking the inverse graph coordinate in these sheared coordinates gives the shifted inverse coordinate
\begin{equation}
    \hat{P}_k
    :=
    (P_k+cI)^{-1},
    \qquad
    P_k
    =
    \hat{P}_k^{-1}-cI.
    \label{eq:shifted_coord_def}
\end{equation}
Thus, the proposed coordinate is the inverse coordinate applied after a fixed symplectic shear of the state-costate variables. The underlying Riccati matrix, and hence the original LQR cost-to-go is recovered exactly.

\begin{proposition}[Chart admissibility]
\label{prop:domain}
Every $P\succeq0$ lies in the domain of $\widehat{P}=(P+C)^{-1}$ if and only
if $C\succ0$. In particular, for $C=cI$ with $c>0$, all Riccati iterates
$P_k$ and all $Q_k+cI$, $Q_T+cI$ are invertible, with
$\lambda_{\min}(P_k+cI)\ge c$.
\end{proposition}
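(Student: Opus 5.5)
The plan is to prove the two directions of the equivalence separately, taking $C$ symmetric as in the statement's context, and then specialize to $C=cI$.

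\emph{Sufficiency.} Assume $C\succ0$ and $P\succeq0$. For any nonzero $v$ we have
\[
v\T(P+C)v \;\ge\; v\T C v \;\ge\; \lambda_{\min}(C)\|v\|^2 \;>\; 0 .
\]
Hence $P+C\succ0$, so it is invertible and $\widehat{P}$ is defined.

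\emph{Necessity.} I argue by contrapositive. Suppose $C$ is not positive definite. Then some unit vector $v$ satisfies $v\T C v\le 0$. Let $\mu=\lambda_{\min}(C)\le0$ with unit eigenvector $v$. I want a $P\succeq0$ that makes $P+C$ singular, and I take
\[
P=-\mu\, v v\T \succeq 0 .
\]
Then $(P+C)v=-\mu v+\mu v=0$, so $P+C$ is singular and this $P\succeq0$ lies outside the domain.

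The only subtle point is whether $C$ is assumed symmetric. If $C$ is a general matrix, "$C\succ0$" presumably refers to its symmetric part, and then the claimed equivalence can fail. For example, a nonzero skew-symmetric $C$ has $P+C$ invertible for every $P\succ0$. I will therefore state explicitly that $C$ is symmetric, which is consistent with the shear being symplectic, since that requires $C=C\T$. This caveat is the main thing to handle carefully; the algebra itself is routine.

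\emph{Specialization to $C=cI$.} From Section~\ref{sec:riccati} we already know that $Q_k,Q_T\succeq0$ and $R_k\succ0$ give $P_k\succeq0$ for all $k$. To be self-contained, I would re-derive this by backward induction: starting from $P_T=Q_T\succeq0$, and since $S_k\succ0$, write $P_k=\min_u\{Q_k+\cdots\}$ in completed-square form,
\[
P_k=Q_k+(A_k-B_kK_k)\T P_{k+1}(A_k-B_kK_k)+K_k\T R_kK_k\succeq0 .
\]
Applying sufficiency with $C=cI\succ0$ to $P_k$, $Q_k$ and $Q_T$ gives invertibility of $P_k+cI$, $Q_k+cI$ and $Q_T+cI$. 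Finally, Weyl's inequality (or directly $v\T(P_k+cI)v\ge c\|v\|^2$) yields $\lambda_{\min}(P_k+cI)\ge\lambda_{\min}(P_k)+c\ge c$.
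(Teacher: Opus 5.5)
Your proof is correct and is essentially the paper's: sufficiency follows from $v^\top(P+C)v\ge v^\top Cv>0$, and necessity from the rank-one $P=-\mu vv^\top\succeq0$ built on an eigenvector of $C$ with eigenvalue $\mu\le0$. Normalizing $v$, as you do, is what makes $(P+C)v=0$ exact, a point the paper leaves implicit. Your backward induction for $P_k\succeq0$ and the restriction to symmetric $C$ only spell out what the paper takes from Section~\ref{sec:riccati} and from the symplectic shear. In your side remark, note that a counterexample for nonsymmetric $C$ needs an \emph{invertible} skew-symmetric $C$ (e.g.\ for even $n$), since the domain includes $P=0$.
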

\begin{proof}
If $C\succ0$, then $v\T(P+C)v\ge v\T Cv>0$ for $v\ne0$. Conversely, if
$Cv=\lambda v$ with $v\ne0$ and $\lambda\le0$, then $P=-\lambda vv\T\succeq0$
and $(P+C)v=0$.
\end{proof}
 
\subsection{Shifted Stage Map and LFT Absorption}
\label{sec:absorption}

For invertible $P_{k+1}$, write $\widetilde{P}_{k+1} = P_{k+1}^{-1}$. The
Woodbury identity rewrites the Riccati recursion~\eqref{eq:riccati} as
$P_k = Q_k + A_k\T\big(\widetilde{P}_{k+1} + B_kR_k^{-1}B_k\T\big)^{-1}A_k$.
Adding $cI$ to both sides and inverting once more with Woodbury gives
\begin{equation}
    \widehat{P}_k = h_k(\widetilde{P}_{k+1})
    := \widehat{E}_k - \widehat{F}_k\big(\widetilde{P}_{k+1} + \widehat{G}_k\big)^{-1}\widehat{F}_k\T,
    \label{eq:h_map}
\end{equation}
with stage coefficients that require only $Q_k + cI \succ 0$:
\begin{align}
    \widehat{E}_k &= (Q_k + cI)^{-1}, \label{eq:Ehat_def}\\
    \widehat{F}_k &= \widehat{E}_k A_k\T, \label{eq:Fhat_def}\\
    \widehat{G}_k &= A_k\widehat{E}_kA_k\T + B_kR_k^{-1}B_k\T. \label{eq:Ghat_def}
\end{align}
The map $h_k$ consumes the unshifted inverse $\widetilde{P}_{k+1}$, whereas
the recursion carries $\widehat{P}_{k+1}$. By~\eqref{eq:shifted_coord_def}
the two are related by
\begin{equation}
    \widetilde{P} = \big(\widehat{P}^{-1} - cI\big)^{-1}
    = \widehat{P}\big(I - c\widehat{P}\big)^{-1} =: \phi(\widehat{P}),
    \label{eq:phi_map_def}
\end{equation}
so the one-step update in shifted coordinates is $\psi_k := h_k\circ\phi$.
The following lemma absorbs $\phi$ into the stage coefficients.

\begin{lemma}[Absorption]
\label{lem:absorb}
Let $\mathcal{R}_k$ denote the Riccati map~\eqref{eq:riccati},
$P_k=\mathcal{R}_k(P_{k+1})$, and define the exact shifted step
\begin{equation}
  \psi_k(X) := \big(\mathcal{R}_k(X^{-1}-cI)+cI\big)^{-1},~ X=(P+cI)^{-1},\ P\succeq0 .
  \label{eq:psi_def}
\end{equation}
Assume $M_k:=I-c\widehat{G}_k$ is invertible and let
\begin{align}
  E'_k &= \widehat{E}_k + c\,\widehat{F}_k M_k^{-1}\widehat{F}_k\T, \label{eq:Eprime_def}\\
  F'_k &= \widehat{F}_k M_k^{-1}, \label{eq:Fprime_def}\\
  G'_k &= \widehat{G}_k M_k^{-1}. \label{eq:Gprime_def}
\end{align}
Then $E'_k$ and $G'_k$ are symmetric, and
\begin{equation}
  \psi_k(X) = E'_k - F'_k\,(X+G'_k)^{-1}{F'_k}\T
  \label{eq:psi}
\end{equation}
for every $X$ in~\eqref{eq:psi_def} with $X+G'_k$ invertible.
\end{lemma}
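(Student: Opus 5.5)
The plan is to verify \eqref{eq:psi} first on the generic set where $P=X^{-1}-cI$ is invertible, by direct algebra on $h_k\circ\phi$, and then extend to all admissible $X$ by continuity. For the generic case, Section~\ref{sec:absorption} gives $\psi_k=h_k\circ\phi$ with $\phi(X)=X(I-cX)^{-1}=(I-cX)^{-1}X$. Here $I-cX$ is invertible exactly when $P$ is, since $I-cX=XP$. So I only need to rewrite $\big(\phi(X)+\widehat{G}_k\big)^{-1}$ in terms of $(X+G'_k)^{-1}$.

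The key factorisation, which I would record before anything else, is
\begin{equation*}
\phi(X)+\widehat{G}_k=(I-cX)^{-1}\big(X+(I-cX)\widehat{G}_k\big)=(I-cX)^{-1}\big(XM_k+\widehat{G}_k\big)=(I-cX)^{-1}(X+G'_k)M_k .
\end{equation*}
Inverting gives $M_k^{-1}(X+G'_k)^{-1}(I-cX)$. I would then write $I-cX=(I+cG'_k)-c(X+G'_k)$ and use the identity
\begin{equation*}
I+cG'_k=(M_k+c\widehat{G}_k)M_k^{-1}=M_k^{-1}.
\end{equation*}
This yields
\begin{equation*}
\big(\phi(X)+\widehat{G}_k\big)^{-1}=M_k^{-1}(X+G'_k)^{-1}M_k^{-1}-cM_k^{-1}.
\end{equation*}
Substituting into \eqref{eq:h_map} gives
\begin{equation*}
\psi_k(X)=\widehat{E}_k+c\widehat{F}_kM_k^{-1}\widehat{F}_k\T-\widehat{F}_kM_k^{-1}(X+G'_k)^{-1}M_k^{-1}\widehat{F}_k\T .
\end{equation*}
This is exactly \eqref{eq:psi}, once we note that $M_k^{-1}\widehat{F}_k\T=(F'_k)\T$.

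Symmetry needs only the fact that $\widehat{G}_k$ is symmetric, which holds by \eqref{eq:Ghat_def} and $R_k\succ0$. Then $M_k$ is symmetric and commutes with $\widehat{G}_k$. Hence $M_k^{-1}$ is symmetric, so $E'_k$ is symmetric, and $G'_k=\widehat{G}_kM_k^{-1}=M_k^{-1}\widehat{G}_k$ is symmetric. This same symmetry of $M_k$ is what justifies $M_k^{-1}\widehat{F}_k\T=(F'_k)\T$ above.

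The main obstacle is the nongeneric case where $P\succeq0$ is singular. There $\phi(X)$ is undefined, and the Woodbury-based derivation of $h_k$ in Section~\ref{sec:absorption} does not apply. I would handle it by density and continuity:
\begin{itemize}
\item Take $P_j=P+\delta_jI\succ0$ with $\delta_j\to0$, and set $X_j=(P_j+cI)^{-1}\to X$.
\item The left side $\psi_k(X_j)=(\mathcal{R}_k(P_j)+cI)^{-1}$ converges to $\psi_k(X)$. This holds because $\mathcal{R}_k$ is continuous on $P\succeq0$ (as $S_k\succeq R_k\succ0$), and $\mathcal{R}_k(P)+cI\succeq cI$ is invertible by Proposition~\ref{prop:domain}.
\item The right side of \eqref{eq:psi} is continuous at $X$ whenever $X+G'_k$ is invertible, which is precisely the hypothesis of the lemma.
\end{itemize}
Passing to the limit then gives the identity at every admissible $X$.

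If one prefers to avoid limits, the alternative is to verify \eqref{eq:psi} directly from \eqref{eq:psi_def} using the Riccati form $\mathcal{R}_k(P)=Q_k+A_k\T(P^{-1}+D_k)^{-1}A_k$ rewritten without $P^{-1}$. However, the continuity route is shorter and is the one I would pursue.
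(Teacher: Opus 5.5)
Your proposal is correct and follows essentially the same route as the paper. For invertible $P$, the paper's proof uses the same factorization $\phi(X)+\Gh_k=(I-cX)^{-1}(X+G'_k)M_k$ and the same identity $I+cG'_k=M_k^{-1}$, and for singular $P$ it extends the result by the same perturbation $P+\delta_j I$ and continuity argument, with symmetry obtained from $M_k$ being a polynomial in the symmetric $\Gh_k$.
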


\begin{proof}
Drop the index $k$ and let $M=I-c\Gh$. As $M$ is a polynomial in the
symmetric $\Gh$, $M^{-1}$ is symmetric and commutes with $\Gh$, so $E'$ and
$G'$ are symmetric. First let $P\succ0$; by the derivation above,
$\psi=h\circ\phi$. Since $X$ commutes with $(I-cX)^{-1}$ and
$\Gh=\Gh M^{-1}M$,
\begin{align*}
\phi(X)+\Gh&=(I-cX)^{-1}\big[X+(I-cX)\Gh\big]\\
&=(I-cX)^{-1}(X+G')M.
\end{align*}
Since $M^{-1}-cG'=(I-c\Gh)M^{-1}=I$, we have $I-cX=M^{-1}-c(X+G')$, so
\[
\big(\phi(X)+\Gh\big)^{-1}=M^{-1}(X+G')^{-1}M^{-1}-cM^{-1}.
\]
Substituting into $h$ and using $M^{-1}\Fh\T=F'^\top$,
\begin{align*}
\psi(X)&=\Eh+c\Fh M^{-1}\Fh\T-F'(X+G')^{-1}F'^\top\\
&=E'-F'(X+G')^{-1}F'^\top.
\end{align*}

For singular $P\succeq0$, let $P_j=P+I/j\succ0$ and
$X_j=(P_j+cI)^{-1}\to X$. The left side
$\psi(X_j)=(\mathcal{R}(P_j)+cI)^{-1}$ converges to $\psi(X)$, since
$\mathcal{R}$ is continuous on $P\succeq0$ (as $R\succ0$) and
$\mathcal{R}(P)+cI\succeq cI$. The right side converges because $X+G'$ is
invertible. Letting $j\to\infty$ gives the identity at $P$.

\end{proof}

\begin{remark}[Matrix Shifts]
\label{rem:general_shift}
Lemma~\ref{lem:absorb} extends to symmetric shears $S_C = \left[\begin{matrix} I & 0 \\ C & I \end{matrix}\right]$ with symmetric $C$, giving $E' = \widehat{E} + \widehat{F} C(I - \widehat{G}C)^{-1}\widehat{F}^\top$, $F' = \widehat{F}(I - C\widehat{G})^{-1}$, and $G' = (I - \widehat{G}C)^{-1}\widehat{G}$. We chose $C = cI$ to avoid non-commutative factor ordering and reduce parameter tuning to scalar $c$.
\end{remark}

\subsection{Admissibility and Choice of the Shift}
\label{sec:admissibility}
Proposition~\ref{prop:domain} guarantees that the shifted \emph{chart} contains every Riccati matrix for any $c>0$. It does not guarantee that the stage \emph{maps} keep the three-parameter form with finite coefficients, nor that their composition and the horizon queries do. Each condition below is the pivot of a composition in family~\eqref{eq:lft_map}: of $h_k$ with $\phi$ for (a), of the prefix with stage $k$ for (b), and of the prefix with the terminal value for (c).

\begin{assumption}[Admissibility]
\label{as:admissibility}
For a given $c$: (a) $I-c\widehat{G}_k$ is invertible for all $k$;
(b) $E'_k+\bar{G}_{k-1}$ is invertible for all $k$;
(c) $(Q_T^{(t)}+cI)^{-1}+\bar{G}_{t-1}$ is invertible for all $t$.
\end{assumption}

For $Q_k,Q_T\succeq0$, the shifted coordinate and $\Eh_k$ exist for every
$c>0$ (Proposition~\ref{prop:domain}); only the stage maps and their
composition can fail, and they can fail for every $c$
(Remark~\ref{rem:persistent}).

\begin{theorem}[Shifted-chart horizon search]
\label{thm:composition}
Let $Q_k,Q_T\succeq0$, $R_k\succ0$ and $c>0$, and let
Assumption~\ref{as:admissibility} hold. Then the triples $(E'_k,F'_k,G'_k)$
compose by~\eqref{eq:W}--\eqref{eq:Gbar} with
$(E_k,F_k,G_k)\to(E'_k,F'_k,G'_k)$, and Algorithm~\ref{alg:shop} returns, in
exact arithmetic, the optimal cost $J_T$ for every $T\in\{1,\dots,N\}$ in
$O(Nn^3)$ operations, without inverting $Q_k$ or $Q_T$ and without
modifying the cost.
\end{theorem}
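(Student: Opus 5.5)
The proof proceeds by induction over the stage index, using Lemma~\ref{lem:absorb} to put each shifted step in the three-parameter family~\eqref{eq:lft_map} and then reusing the HOP composition law verbatim.

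\emph{Step 1: per-stage form.} By Lemma~\ref{lem:absorb} and Assumption~\ref{as:admissibility}(a), each exact shifted step satisfies $\psi_k(X)=E'_k-F'_k(X+G'_k)^{-1}{F'_k}\T$, with $E'_k,G'_k$ symmetric. This is exactly the form~\eqref{eq:lft_map}.

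\emph{Step 2: closure under composition.} I will verify algebraically that for any two maps $g_1(X)=E_1-F_1(X+G_1)^{-1}F_1\T$ and $g_2(X)=E_2-F_2(X+G_2)^{-1}F_2\T$ with symmetric $E_i,G_i$, the composition $g_1\circ g_2$ again has this form. Its coefficients are the ones in~\eqref{eq:W}--\eqref{eq:Gbar}, with pivot $W=(E_2+G_1)^{-1}$. The argument writes $(g_2(X)+G_1)^{-1}=(W^{-1}-F_2(X+G_2)^{-1}F_2\T)^{-1}$ and applies Woodbury once. This yields
\[
(g_2(X)+G_1)^{-1}=W+WF_2\big(X+G_2-F_2\T WF_2\big)^{-1}F_2\T W .
\]
Substituting into $g_1$ gives $\bar E=E_1-F_1WF_1\T$, $\bar F=F_1WF_2$ and $\bar G=G_2-F_2\T WF_2$. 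The computation uses only symmetry and the invertibility of $W^{-1}$, so it does not depend on where the coefficients came from. Hence the substitution $(E_k,F_k,G_k)\to(E'_k,F'_k,G'_k)$ is legitimate, and Assumption~\ref{as:admissibility}(b) supplies invertibility of every pivot $E'_k+\bar G_{k-1}$.

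\emph{Step 3: query and recovery.} By induction, $\psi_0\circ\cdots\circ\psi_{T-1}=E-F(\cdot+\bar G_{T-1})^{-1}F\T$ with composed triple $(\bar E_{T-1},\bar F_{T-1},\bar G_{T-1})$. Evaluating at $X_T=(Q_T+cI)^{-1}$, which exists by Proposition~\ref{prop:domain}, is well defined by Assumption~\ref{as:admissibility}(c). The definition~\eqref{eq:psi_def} makes the chain $X_k=(P_k+cI)^{-1}$ telescope, since each $P_k\succeq0$ and so each intermediate argument lies in the stated domain. This gives $\widehat P_0^{(T)}=(P_0^{(T)}+cI)^{-1}$. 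Then $P_0^{(T)}=(\widehat P_0^{(T)})^{-1}-cI$ by~\eqref{eq:shifted_coord_def}, and $J_T=\tfrac12x_0\T P_0^{(T)}x_0+wT$ follows from~\eqref{eq:optimal_cost}.

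\emph{Main obstacle.} The delicate point is that Lemma~\ref{lem:absorb} asserts the formula only at arguments $X$ where $X+G'_k$ is invertible. Meanwhile the composed formula is evaluated at $X_T$, and the intermediate values $X_{k+1}$ may not satisfy this. My plan is to argue at the level of rational matrix functions. Both sides of the composed identity are rational in $X$ and agree on a dense open set of the domain (arguments $(P+cI)^{-1}$ with $P\succ0$ generic, where all intermediate pivots are invertible). The right-hand side is continuous at $X_T$ by (c). The left-hand side is continuous by the continuity of $\mathcal R_k$ on $P\succeq0$, used as in the limiting argument of the lemma. The identity therefore extends to $X_T$. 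If genericity proves awkward, I would instead show that invertibility of the composed pivot in (c) forces the intermediate pivots to be invertible.

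\emph{Complexity and scope.} Each stage costs $O(n^3)$: one inversion for $\widehat E_k$, one for $M_k$, one for $W_k$, plus matrix products. Each query is also $O(n^3)$, giving $O(Nn^3)$ overall. Only $Q_k+cI$ and $Q_T+cI$ are ever inverted, and the cost data are left unchanged.
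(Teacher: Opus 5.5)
Your Steps 1--3 and the operation count are the paper's proof. The paper also puts each $\psi_k$ into the family~\eqref{eq:lft_map} via Lemma~\ref{lem:absorb}, reuses \eqref{eq:W}--\eqref{eq:Gbar} on the strength of that form alone, queries at $(Q_T+cI)^{-1}$ and unshifts. Your explicit Woodbury derivation of the closure law just spells out what the paper takes from HOP.

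The paper never raises your ``main obstacle''. Your density fix for it is not justified as written, because genericity of the innermost argument $X$ does not carry over to the intermediate arguments $X_{k+1}=\psi_{k+1}\circ\cdots\circ\psi_{T-1}(X)$. If $A_{k+1}=0$, then $X_{k+1}=(Q_{k+1}+cI)^{-1}$ for every $X$. If $A_{k+1}$ is singular with kernel meeting $\mathcal{N}(Q_{k+1})$, every $P_{k+1}$ is singular, so the $P\succ0$ branch of the lemma is never reached. In either case ``$P\succ0$ generic'' gives no open set on which the intermediate pivots are known to be invertible.

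The obstacle is not real, though. For $X=(P+cI)^{-1}$ with $P\succeq0$ you have $I-cX=XP$ and $G'_kM_k=\Gh_k$, hence
\[
(X+G'_k)M_k=X+(I-cX)\Gh_k=X\,(I+P\Gh_k).
\]
Here $I+P\Gh_k$ is invertible because $P,\Gh_k\succeq0$. So under (a) the proviso of Lemma~\ref{lem:absorb} holds on the whole chart domain, which is what the paper's proof uses tacitly when it applies the lemma ``including at singular $P_{k+1}$''.

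Your Step~2 Woodbury computation then gives
\[
\det\bigl(\psi_k(Y)+\Gb_{k-1}\bigr)=\frac{\det(Y+\Gb_k)}{\det W_k\,\det(Y+G'_k)} .
\]
So along the actual chain, $X_k+\Gb_{k-1}$ is invertible if and only if $X_{k+1}+\Gb_k$ is. This is your fallback made precise: (c) forces every intermediate composed pivot to be invertible. The induction $X_0=\Eb_k-\Fb_k(X_{k+1}+\Gb_k)^{-1}\Fb_k\T$, $k=0,\dots,T-1$, then runs exactly at $X_T$, with no density or limiting argument. Running the same equivalence upward from $X_1+\Gb_0=X_1+G'_0$ shows further that, in exact arithmetic, (c) already follows from (a) and (b).
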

\begin{proof}
By Proposition~\ref{prop:domain} the coefficients
$\widehat{E}_k,\widehat{F}_k,\widehat{G}_k$ exist, and by
Lemma~\ref{lem:absorb} each exact shifted step $\psi_k$, including at
singular $P_{k+1}$, lies in the family~\eqref{eq:lft_map}. The composition
rule~\eqref{eq:W}--\eqref{eq:Gbar} uses only this form, so it applies
unchanged and gives $\widehat{P}_0^{(T)}=\psi_{0:T-1}\big((Q_T+cI)^{-1}\big)$;
then $P_0=(\widehat{P}_0^{(T)})^{-1}-cI$ by~\eqref{eq:shifted_coord_def}.
The $N$ stage constructions, $N-1$ composition steps and $N$ queries each
cost $O(n^3)$.
\end{proof}

\begin{proposition}[Generic admissibility]
\label{prop:generic}
For fixed data, the determinants of the matrices in
Assumption~\ref{as:admissibility} are rational functions of $c$ on $c>0$.
Unless one vanishes identically, Assumption~\ref{as:admissibility} fails for
at most finitely many $c>0$.
\end{proposition}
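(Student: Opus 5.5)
The plan is to induct along the construction order of the composition. At each stage I show that every matrix entering Assumption~\ref{as:admissibility} has entries that are rational functions of $c$. Each such function is defined on $c>0$ except at finitely many poles, and those poles are themselves zeros of earlier pivot determinants. A nonzero rational function has finitely many zeros, and the earlier pivots are treated in the same way, so the conclusion follows.

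First, $\widehat{E}_k=(Q_k+cI)^{-1}$ is a rational function of $c$ by Cramer's rule. Its denominator $\det(Q_k+cI)$ is positive for $c>0$ by Proposition~\ref{prop:domain}, so $\widehat{E}_k$ is in fact rational and pole-free on $c>0$. Then $\widehat{F}_k=\widehat{E}_kA_k\T$ and $\widehat{G}_k=A_k\widehat{E}_kA_k\T+D_k$ are rational with no poles on $c>0$. Hence $\det M_k=\det(I-c\widehat{G}_k)$ is a rational function of $c$, which gives pivot (a).

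Second, on the set where $\det M_k\neq0$, Cramer's rule gives $M_k^{-1}=\operatorname{adj}(M_k)/\det M_k$. So $E'_k$, $F'_k$ and $G'_k$ from Lemma~\ref{lem:absorb} are rational in $c$, with poles only at zeros of $\det M_k$. Next comes the composition recursion~\eqref{eq:W}--\eqref{eq:Gbar}, with $(E_k,F_k,G_k)\to(E'_k,F'_k,G'_k)$. Suppose $\bar{E}_{k-1},\bar{F}_{k-1},\bar{G}_{k-1}$ are rational. Then $\det(E'_k+\bar{G}_{k-1})$ is rational, which gives pivot (b). Wherever it is nonzero, $W_k$ is rational, and therefore so are $\bar{E}_k,\bar{F}_k,\bar{G}_k$. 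Finally, $\det\big((Q_T^{(t)}+cI)^{-1}+\bar{G}_{t-1}\big)$ is rational by the same argument, which gives pivot (c). Since there are finitely many indices $k,t\le N$, the assumption involves finitely many determinants in total.

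To finish, suppose none of these determinants vanishes identically. Each one is then a ratio $p/q$ of polynomials with $p\not\equiv0$, so it has finitely many zeros on $c>0$. Its poles lie among the zeros of earlier determinants. The exceptional set is the union of all these zero sets, and this union is finite. Off this set every pivot is invertible, which is exactly Assumption~\ref{as:admissibility}.

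The main obstacle is making precise what ``the determinant is a rational function on $c>0$'' means when the determinant is only defined off the zero sets of earlier pivots. I would handle this by working over the field $\mathbb{R}(c)$ of rational functions. If every earlier determinant is a nonzero element of $\mathbb{R}(c)$, the whole recursion is carried out in $\mathbb{R}(c)$. Evaluating at a specific $c$ then agrees with the recursion at that $c$ whenever $c$ avoids the finitely many zeros and poles encountered. The phrase ``vanishes identically'' is then read as ``is the zero element of $\mathbb{R}(c)$''. If an earlier determinant is the zero element, the later ones are undefined, and that case already falls under the excluded degenerate case.
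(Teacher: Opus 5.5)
Your proof is correct and follows the paper's argument: $\widehat{E}_k$ is rational and pole-free on $c>0$, rationality propagates through $M_k^{-1}$ and the composition recursion by induction, and a nonzero rational function has finitely many zeros. Working over $\mathbb{R}(c)$, with the poles of each pivot determinant confined to zeros of earlier ones, makes explicit a point that the paper's three-line proof leaves implicit.
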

\begin{proof}
$\widehat{E}_k=(Q_k+cI)^{-1}$ is rational in $c$ with poles only at
$-\lambda_i(Q_k)\le0$, so $\widehat{F}_k,\widehat{G}_k$, and wherever (a)
holds $E'_k,F'_k,G'_k$, are rational on $c>0$. By induction
through~\eqref{eq:W}--\eqref{eq:Gbar}, so are $\bar{E}_k,\bar{F}_k,\bar{G}_k$
and the matrix in (c). A nonzero rational function has finitely many zeros.
\end{proof}

\begin{remark}[Persistent failure]\label{rem:persistent}
If $w\ne0$ satisfies $B_k\T w=0$, $Q_kA_k\T w=0$ and $A_kA_k\T w=w$, then
$\Eh_kA_k\T w=A_k\T w/c$ and $(I-c\Gh_k)w=0$ for every $c>0$. This is the
degenerate case of Proposition~\ref{prop:generic}; it has
$\lVert A_k\Pi_k\rVert_2\ge1$.
\end{remark}
\begin{remark}[Factorization]\label{rem:ldl}
The pivots are symmetric but, outside the regime of
Proposition~\ref{prop:sufficient}, indefinite. They are factored by LU or
$LDL\T$, not Cholesky.
\end{remark}

\begin{proposition}[Sufficient condition]
\label{prop:sufficient}
Let $D_k=B_kR_k^{-1}B_k\T$ and let $\Pi_k$ project onto $\mathcal{N}(Q_k)$.
(i) If $c\,\lambda_{\max}(\Gh_k)<1$ for all $k$, every pivot is symmetric
positive definite. (ii) Such $c>0$ exists if and only if
$\lVert A_k\Pi_k\rVert_2<1$ for all $k$; then (i) holds for all
$c\in(0,\bar c)$, with
$\bar c=\min_k\big(1-\lVert A_k\Pi_k\rVert_2^2\big)\big/
\big(\lVert A_k(I-\Pi_k)\rVert_2^2/\lambda_+(Q_k)+\lVert D_k\rVert_2\big)$
and $\lambda_+$ the smallest positive eigenvalue.
\end{proposition}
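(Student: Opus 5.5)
The plan is to prove (ii) first, since it is a direct eigenvalue computation, and then obtain (i) by an induction over the composition recursion~\eqref{eq:W}--\eqref{eq:Gbar} with stage coefficients $(E'_k,F'_k,G'_k)$.

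\textbf{Part (ii).} Since $\Pi_k$ is the orthogonal projector onto $\mathcal{N}(Q_k)$ and $Q_k$ is symmetric, $Q_k$ commutes with $\Pi_k$. Hence
\[
\Eh_k=(Q_k+cI)^{-1}=\Pi_k/c+(I-\Pi_k)(Q_k+cI)^{-1}(I-\Pi_k).
\]
The second term satisfies $0\preceq(I-\Pi_k)(Q_k+cI)^{-1}(I-\Pi_k)\preceq(I-\Pi_k)/\lambda_+(Q_k)$. Using $\Pi_k^2=\Pi_k$, this gives the sandwich
\[
A_k\Pi_kA_k\T\preceq c\,\Gh_k\preceq A_k\Pi_kA_k\T+c\big(A_k(I-\Pi_k)A_k\T/\lambda_+(Q_k)+D_k\big).
\]
Taking $\lambda_{\max}$ of the lower bound gives $c\,\lambda_{\max}(\Gh_k)\ge\lVert A_k\Pi_k\rVert_2^2$ for every $c>0$. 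So if $c\,\lambda_{\max}(\Gh_k)<1$ for all $k$, then $\lVert A_k\Pi_k\rVert_2<1$ for all $k$, which is the necessity direction. Taking norms in the upper bound gives
\[
c\,\lambda_{\max}(\Gh_k)\le\lVert A_k\Pi_k\rVert_2^2+c\big(\lVert A_k(I-\Pi_k)\rVert_2^2/\lambda_+(Q_k)+\lVert D_k\rVert_2\big).
\]
This is $<1$ exactly when $c$ is below the $k$-th ratio in $\bar c$. Minimizing over $k$ gives sufficiency for all $c\in(0,\bar c)$. If $Q_k=0$, the $\lambda_+$ term is absent and the formula is read with that term dropped.

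\textbf{Part (i), pivot (a) and stage coefficients.} Pivot (a) is $M_k=I-c\Gh_k$, which is positive definite by hypothesis. Because $M_k^{-1}$ is a function of $\Gh_k$, the eigenvalues of $G'_k=\Gh_kM_k^{-1}$ are $g/(1-cg)\ge0$, so $G'_k\succeq0$. Also $E'_k=\Eh_k+c\Fh_kM_k^{-1}\Fh_k\T\succeq\Eh_k\succ0$.

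\textbf{Part (i), pivots (b) and (c).} The key claim is that $\Gb_k\succeq0$ for all $k$, proved by induction. The base case is $\Gb_0=G'_0\succeq0$. For the step, the induction hypothesis $\Gb_{k-1}\succeq0$ makes pivot (b), $E'_k+\Gb_{k-1}$, positive definite. It also gives $(E'_k+\Gb_{k-1})^{-1}\preceq E_k'^{-1}$, and therefore
\[
\Gb_k\succeq G'_k-F_k'^{\top}E_k'^{-1}F'_k.
\]
So it suffices to show the stage Schur complement $G'_k-F_k'^{\top}E_k'^{-1}F'_k\succeq0$, i.e. that the block matrix $\begin{bmatrix}E'_k&F'_k\\F_k'^{\top}&G'_k\end{bmatrix}$ is positive semidefinite.

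\textbf{Main obstacle.} Proving this Schur-complement inequality is the step I expect to be hardest. At $c=0$ it reduces to $\Gh-A\Eh A\T=D\succeq0$.
\begin{itemize}
\item \emph{First attempt.} Write the block matrix as a congruence of $\begin{bmatrix}\Eh&\Fh\\\Fh\T&\Gh\end{bmatrix}$, which is positive semidefinite because its Schur complement is $D_k$, by a transformation built from $M_k^{-1/2}$ and the shear. Then check the $c$-correction terms through the identity $M^{-1}-cG'=I$.
\item \emph{Fallback.} Use the interpretation of $\psi_k$ as an exact Riccati step, with $X\mapsto\psi(X)$ operator-monotone on the admissible set. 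Read off $\Gb_k\succeq0$ from the limiting behaviour of the composed LFT, with $X\to0^+$ giving $\Eb_k-\Fb_k\Gb_k^{-1}\Fb_k\T$.
\end{itemize}

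Finally, pivot (c) is $(Q_T+cI)^{-1}+\Gb_{t-1}$. It is the sum of a positive definite matrix and a positive semidefinite one, hence positive definite, which completes (i).
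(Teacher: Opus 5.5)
Your part (ii) and the skeleton of part (i) match the paper's proof. That includes the sandwich $\Pi_k\preceq c\Eh_k\preceq\Pi_k+\tfrac{c}{\lambda_+}(I-\Pi_k)$, the induction $\Gb_{k-1}\succeq0\Rightarrow W_k\preceq E_k'^{-1}\Rightarrow\Gb_k\succeq G'_k-F_k'^\top E_k'^{-1}F'_k$, and the closing argument for pivot (c); your remark on $Q_k=0$ is a useful addition. The gap is that all of (i) rests on the stage inequality $G'_k-F_k'^\top E_k'^{-1}F'_k\succeq0$, and you do not prove it. The congruence in your first attempt is never exhibited. The identity $M^{-1}-cG'=I$ alone does not separate $D$ from $S:=A\Eh A^\top$, and that split is what the Schur complement depends on. The fallback cannot work as stated. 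Exactness of $\psi_k$ as a Riccati step, and its monotonicity, hold whenever Assumption~\ref{as:admissibility} holds, including outside regime (i), where $\Gb_k$ is genuinely indefinite. Already for $n=1$, $Q=0$, $a^2\ge1$, $d>0$ one gets $G'_0=\Gh/(1-a^2-cd)<0$. So the hypothesis $c\lambda_{\max}(\Gh_k)<1$ must enter this step quantitatively. The limit $X\to0^+$ only describes $\Eb_k-\Fb_k\Gb_k^{-1}\Fb_k^\top$; it says nothing about the sign of $\Gb_k$.

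The missing idea is a short direct computation, which the paper carries out. With $S=A\Eh A^\top=\Fh^\top\Eh^{-1}\Fh$ you have $\Gh=S+D$ and $M+cS=I-cD$. Since $M\succ0$ and $S\succeq0$, this gives $I-cD\succ0$. From the definition of $E'$, $\Fh^\top\Eh^{-1}E'=(I+cSM^{-1})\Fh^\top$, hence
\[
F'^\top E'^{-1}F'=M^{-1}(I+cSM^{-1})^{-1}SM^{-1}=(M+cS)^{-1}SM^{-1}=(I-cD)^{-1}SM^{-1},
\]
and therefore
\[
G'-F'^\top E'^{-1}F'=(I-cD)^{-1}\big[(I-cD)\Gh-S\big]M^{-1}=(I-cD)^{-1}DMM^{-1}=(I-cD)^{-1}D\succeq0.
\]
The last step holds because $(I-cD)^{-1}$ is a positive definite function of $D$ and commutes with it. With this identity inserted, your induction goes through verbatim. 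Equivalently, under (i) the triple $(E'_k,F'_k,G'_k)$ is the inverse-coordinate triple of an auxiliary problem with cost $E_k'^{-1}\succ0$, dynamics $F_k'^\top E_k'^{-1}$ and input term $(I-cD_k)^{-1}D_k\succeq0$, which is the setting of HOP.
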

\begin{proof}
\emph{(i)} Drop the index $k$ and let $M=I-c\Gh\succ0$ and $S=A\Eh A\T$, so
that $\Gh=S+D$ and $M+cS=I-cD\succ0$. Then $E'\succ0$ and
$G'=\Gh M^{-1}\succeq0$. Multiplying $E'$ by $\Fh\T\Eh^{-1}$ gives
$\Fh\T\Eh^{-1}E'=(I+cSM^{-1})\Fh\T$, from which
\begin{align*}
F'^\top E'^{-1}F' &= (I-cD)^{-1}SM^{-1},\\
G'-F'^\top E'^{-1}F' &= (I-cD)^{-1}D \succeq 0 .
\end{align*}
If $\Gb_{k-1}\succeq0$, then $E'_k+\Gb_{k-1}\succeq E'_k\succ0$, so
$W_k\preceq E_k'^{-1}$ and
\[
\Gb_k \succeq G'_k-F_k'^\top E_k'^{-1}F'_k \succeq 0 .
\]
Since $\Gb_0=G'_0\succeq0$, this holds for all $k$, and pivot (c) is
positive definite.

\emph{(ii)} The matrix $c\Eh_k=(I+Q_k/c)^{-1}$ has eigenvalue $1$ on
$\mathcal{N}(Q_k)$ and $c/(c+\lambda_i)\le c/\lambda_+$ otherwise, so
\[
\Pi_k \preceq c\Eh_k \preceq \Pi_k+\tfrac{c}{\lambda_+}(I-\Pi_k).
\]
Substituting into $c\Gh_k=A_k(c\Eh_k)A_k\T+cD_k$ gives
\begin{align*}
c\lambda_{\max}(\Gh_k) &\ge \lVert A_k\Pi_k\rVert^2,\\
c\lambda_{\max}(\Gh_k) &\le \lVert A_k\Pi_k\rVert^2
 + c\big(\lVert A_k(I-\Pi_k)\rVert^2/\lambda_+ + \lVert D_k\rVert\big),
\end{align*}
and $c\Gh_k$ is nondecreasing in $c$.
\end{proof}
Under (i), the shifted maps are inverse-coordinate maps of an auxiliary
problem with positive definite cost $E_k'^{-1}$, i.e., the setting
of~\cite{hop}. The physical systems of Section~\ref{sec:numerical} have
$\lVert A_k\Pi_k\rVert_2\ge1$; there the pivots are indefinite
(Remark~\ref{rem:ldl}), and admissibility rests on
Proposition~\ref{prop:generic} and run-time certification.

\emph{Run-time certification.} Algorithm~\ref{alg:shop} estimates the
reciprocal condition number of every pivot from its LU factors, at $O(n^2)$
cost each. Horizon $t$ is \emph{certified} if every pivot it uses, namely
(a) and (b) at stages $0,\dots,t-1$ and (c) at $t$, has reciprocal
condition number at least $\tau$. Uncertified horizons may be retried with
$c\gets\rho c$; a retry can only add certified horizons. The algorithm
returns $J_t$ and the minimizer over certified horizons only, with the
certified set, so a violation of Assumption~\ref{as:admissibility} is
reported rather than returned as a cost, and the $O(Nn^3)$ bound is
unchanged. Certification bounds the conditioning of each pivot, not the
final error: near the threshold the error can approach $u/\tau$, so $\tau$
trades coverage against accuracy.

\begin{algorithm}[t]
\caption{Shifted HOP-LQR with certification}
\label{alg:shop}
\begin{algorithmic}[1]
\footnotesize
\Require $\{A_k,B_k,Q_k,R_k\}_{k=0}^{N-1}$, $\{Q_T^{(t)}\}_{t=1}^{N}$, $x_0$, $w$, $c$, $\tau$
\Ensure certified set $\mathcal{C}$, $\{J_t\}_{t\in\mathcal{C}}$, $T^\star=\arg\min_{t\in\mathcal{C}}J_t$
\Statex \textit{Stage coefficients (Lemma~\ref{lem:absorb})}
\For{$k=0$ \textbf{to} $N-1$}
  \State $\Eh_k\gets(Q_k+cI)^{-1}$;\; $\Fh_k\gets\Eh_kA_k\T$
  \State $\Gh_k\gets A_k\Eh_kA_k\T+B_kR_k^{-1}B_k\T$
  \State $M_k\gets I-c\Gh_k$;\; $r^a_k\gets\mathrm{rcond}(M_k)$ \Comment{(a)}
  \State $F'_k\gets\Fh_kM_k^{-1}$;\; $G'_k\gets\Gh_kM_k^{-1}$
  \State $E'_k\gets\Eh_k+cF'_k\Fh_k\T$
\EndFor
\Statex \textit{Composition}
\State $(\Eb_0,\Fb_0,\Gb_0)\gets(E'_0,F'_0,G'_0)$;\; $s_1\gets r^a_0$
\For{$k=1$ \textbf{to} $N-1$}
  \State $r^b_k\gets\mathrm{rcond}(E'_k+\Gb_{k-1})$ \Comment{(b)}
  \State update $(\Eb_k,\Fb_k,\Gb_k)$ by~\eqref{eq:W}--\eqref{eq:Gbar}
  \State $s_{k+1}\gets\min(s_k,\,r^a_k,\,r^b_k)$
\EndFor
\Statex \textit{Queries and certification}
\State $\mathcal{C}\gets\emptyset$
\For{$t=1$ \textbf{to} $N$}
  \State $H_t\gets(Q_T^{(t)}+cI)^{-1}+\Gb_{t-1}$;\; $r^c_t\gets\mathrm{rcond}(H_t)$ \Comment{(c)}
  \State $\Ph_0\gets\Eb_{t-1}-\Fb_{t-1}H_t^{-1}\Fb_{t-1}\T$
  \State $J_t\gets\tfrac12x_0\T(\Ph_0^{-1}-cI)x_0+wt$
  \State \textbf{if} $\min(s_t,r^c_t)\ge\tau$ \textbf{then} $\mathcal{C}\gets\mathcal{C}\cup\{t\}$
\EndFor
\end{algorithmic}
\end{algorithm}

%

\section{Numerical Study}
\label{sec:numerical}

Errors are $\max_{1\le t\le N}|J_t-J_t^{\rm ref}|/\max(|J_t^{\rm ref}|,10^{-12})$,
the worst disagreement over all candidate horizons, and a horizon is
\emph{wrong} when the method's $T^\star$ differs from brute force. Unless
stated otherwise, Shifted-HOP runs Algorithm~\ref{alg:shop} with $c=c_0=1$,
the checks of Sec.~\ref{sec:admissibility} enabled ($\tau=10^{-10}$), and no
tuning. Loading denotes the inverse coordinate with
$Q_k\mapsto Q_k+\varepsilon I$; \emph{oracle} $\varepsilon$ is the grid value
($10^{-20}$--$10^{2}$) minimizing cost error against brute force, which is
unavailable in practice. Naive HOP is the inverse coordinate with an
escalating Cholesky jitter starting at $10^{-10}$.

\subsection{Perturbing \texorpdfstring{$Q_k$}{Qk}}
\label{sec:loading}

\begin{figure}[t]
\centering
\includegraphics[width=\columnwidth]{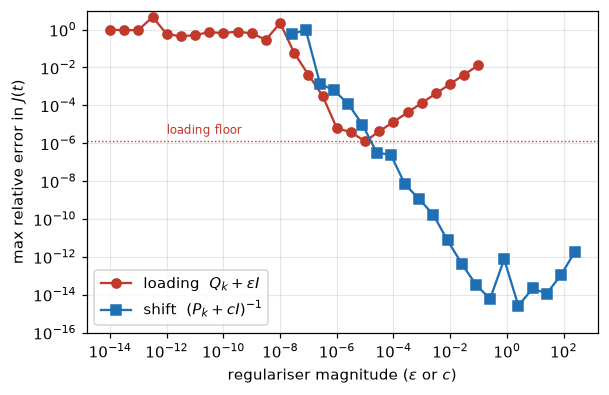}
\caption{Left: relative error against regularizer magnitude ($n=6$,
$N=18$); loading is $U$-shaped with a floor near $10^{-6}$, the shifted chart
is flat over a wide range of $c$. Right: best-of-grid error against the
coupling of an unpenalized direction ($n=6$, $N=30$, median of 4 seeds).}
\label{fig:loading}
\end{figure}

Diagonal loading restores invertibility by changing the problem.
Fig.~\ref{fig:loading} (left) compares both methods at their best parameter:
loading reaches $2.7\times10^{-6}$ at $\varepsilon=1.4\times10^{-6}$, while
the shifted chart reaches $4.3\times10^{-15}$ at its best $c$ and
$1.3\times10^{-14}$ at $c_0$. Too small an $\varepsilon$ amplifies
$(Q_k+\varepsilon I)^{-1}$ through the composition; too large an
$\varepsilon$ solves a different problem. Across four configurations
($n\in\{4,6,8\}$, $N\in\{20,60\}$, 8 seeds each) the shifted chart wins all
$32$ trials, at its best $c$ and at $c_0$ alike, with median error ratios
$2.7$--$5.1\times10^{8}$.

\begin{table}[t]
\caption{Horizon selection and cost error without tuning. Random: $200$
instances $\times$ $5$ draws of $x_0$ (wrong out of $1000$, max.\ relative
error). Free mode: $\rho(A_k)=1.05$ plus an unstable, uncontrollable,
unpenalized mode, $n\in\{6,12,24\}$ (wrong out of $15$).}
\label{tab:eps}
\centering
\small
\setlength{\tabcolsep}{3.5pt}
\begin{tabular}{lrrrr}
\hline
 & \multicolumn{2}{c}{Random} & \multicolumn{2}{c}{Free mode} \\
Method & wrong & max err & $N{=}120$ & $N{\ge}500$ \\
\hline
Loading, $\varepsilon=10^{-4}$  & 0   & $3.4\times10^{-2}$ & 0  & 0 \\
Loading, $\varepsilon=10^{-6}$  & 0   & $3.5\times10^{-4}$ & 0  & 15 \\
Loading, $\varepsilon=10^{-8}$  & 46  & $3.9\times10^{2}$  & 2  & 2 \\
Loading, $\varepsilon=10^{-10}$ & 221 & $2.5\times10^{3}$  & 15 & 15 \\
Loading, oracle $\varepsilon$   & 0   & $1.3\times10^{-4}$ & 0  & 11 \\
Naive HOP                       & 240 & $1.0\times10^{2}$  & 13 & 13 \\
Shifted, $c_0=1$                & 0   & $1.0\times10^{-11}$ & 0 & 0$^\dagger$ \\
\hline
\multicolumn{5}{l}{\footnotesize $^\dagger$Horizons beyond $t\approx215$--$231$ uncertified; $T^\star$ certified on all 15.}
\end{tabular}
\end{table}

The floor depends on how the dynamics couple unpenalized and penalized
states. Let $\mathcal{N}(Q_k)=\operatorname{span}\{v\}$, $\lVert v\rVert=1$,
and let $\kappa=\lVert A_kv-(v\T A_kv)v\rVert$ be the part of $A_kv$ that
leaves this direction. As $\kappa$ grows from $0$ to $2.17$, loading's best
error rises from $5.0\times10^{-15}$ to $4.6\times10^{-6}$, while Shifted-HOP
stays at or below $4.2\times10^{-15}$ (Fig.~\ref{fig:coupling}).

In practice $\varepsilon$ must be fixed without a reference.
Table~\ref{tab:eps} tests four fixed values on random instances and on
\emph{free-mode} instances, which add an unstable, uncontrollable,
unpenalized mode. No fixed $\varepsilon$ is both accurate and reliable:
$\varepsilon\le10^{-8}$ picks the wrong horizon on up to $22\%$ of random
trials, $\varepsilon=10^{-6}$ on all $15$ free-mode instances with
$N\ge500$, and only $\varepsilon=10^{-4}$ is correct on both sets, at cost
errors up to $1.4\times10^{10}$. Adaptive jitter is wrong on $24\%$ of
random trials, and on $24.7\%$ when restricted to $T\ge2$. Shifted-HOP with
$c_0=1$ is correct on every instance of both sets.
\begin{figure}[t]
\centering
\includegraphics[width=\columnwidth]{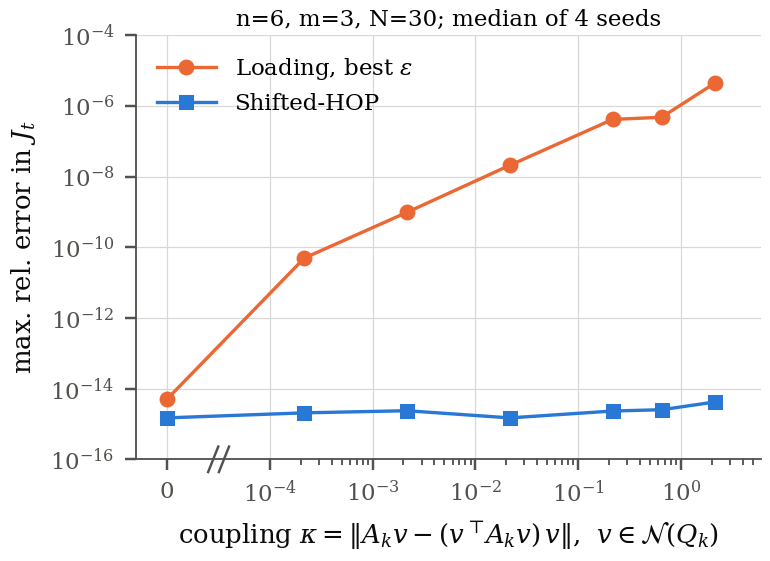}
\caption{Best-of-grid relative cost error against the coupling $\kappa$ of an
unpenalized direction ($n=6$, $m=3$, $N=30$, median of $4$ seeds).}
\label{fig:coupling}
\end{figure}

\subsection{Admissibility in Practice}
\label{sec:adm_numerics}

The checks matter only when a pivot degrades, and they turn a silent failure
into a reported one. On the free-mode instances of Table~\ref{tab:eps}, the
reciprocal condition number of pivot (c) decays with the horizon to
$10^{-18}$. Without the checks, Algorithm~\ref{alg:shop} returns every cost
anyway, with relative errors of $4\times10^{5}$ at $N=500$ and
$4\times10^{13}$ at $N=1000$, and selects the wrong horizon on $2$ and $7$
of $15$ instances. With the checks, horizons beyond $t\approx215$--$231$ are
reported as uncertified, the true $T^\star$ is certified on every instance,
and the error on certified horizons is at most $1.5\times10^{-5}$. On all
other runs ($346$), every horizon was certified without a retry; the
smallest reciprocal condition numbers were $2.3\times10^{-7}$,
$3.0\times10^{-8}$ and $1.9\times10^{-7}$ for pivots (a), (b) and (c).

\subsection{The Released HOP Implementation}
\label{sec:di}

The released HOP-DDP code~\cite{hop} does not jitter $Q_k$; it builds an
augmented $(n{+}1)$-dimensional system per stage and jitters that matrix. On a double integrator
($Q_k=\mathrm{diag}(1,0)$), a segway and a cart with time-varying mass, all
with singular $Q_k$, it selects the correct horizon on every trial when
$Q_T$ shares the running cost's null space. With a full-rank terminal cost,
and the search over all $T\in\{1,\dots,N\}$ as in problem~\eqref{eq:holqr},
it is wrong on $24.5$--$34.5\%$ of trials; every wrong pick is $T^\star\le2$,
and restricting the search to $T\ge2$ removes all failures. These counts are
sensitive to the last bit of $x_0$: perturbing $x_0$ by one to four ulp
changes \textsc{HOP (ref)}'s selection on $175$ of $200$ segway trials and
moves its failure count between $48$ and $69$, while brute force changes on
none. Shifted-HOP, Naive HOP and Loading are correct on every trial of this
protocol. The authors' own examples search $T\ge10$ and are unaffected.

\subsection{Exactness and Scaling}
\label{sec:runtime}

On the random instances, Shifted-HOP and an independent condensed-QP
formulation agree with brute force on every horizon choice; the median over
$x_0$ draws of the worst cost error is $2.4\times10^{-12}$ for Shifted-HOP
and $1.8\times10^{-14}$ for the QP. The true optimum is strictly interior in
$38\%$ of trials. The identities of Lemma~\ref{lem:absorb} and
Theorem~\ref{thm:composition} hold to within $1.5\times10^{-11}$ on $200$
singular instances. Error does not grow with the horizon: up to $N=1000$ it
stays below $10^{-12}$ on contracting random dynamics and below
$1.7\times10^{-10}$ on the open-loop unstable segway ($\rho(A)=1.34$) and on
random dynamics with $\rho(A_k)=1.05$.

\begin{table}[t]
\caption{Median runtime (ms), $n=6$, $m=2$, $15$ repeats;
Shifted-HOP includes $c_0$ and the admissibility checks.}
\label{tab:runtime}
\centering
\small
\setlength{\tabcolsep}{4pt}
\begin{tabular}{rrrrrr}
\hline
$N$ & BF & Shifted & Naive & Loading & HOP (ref) \\
\hline
10  & 0.99 & 1.10 & 1.12 & 0.73 & 2.17 \\
40  & 14.7 & 4.37 & 4.97 & 3.32 & 10.2 \\
120 & 121  & 12.2 & 13.0 & 8.64 & 27.0 \\
\hline
\end{tabular}
\end{table}

A fit $t=aN^p+b$ over $N\in\{10,\dots,120\}$ gives $p=1.98$ for brute force
and $p\le1.0$ for every composition method (Table~\ref{tab:runtime}).
Shifted-HOP runs at $0.88$--$0.98\times$ Naive HOP's time and
$1.3$--$1.5\times$ Loading's; the admissibility checks account for
$26$--$29\%$ of its runtime at $N=120$. \textsc{HOP (ref)} is
$2.0$--$2.3\times$ slower than Shifted-HOP. On the three physical systems at
$N=30$, Shifted-HOP is $2.6$--$2.8\times$ faster than brute force.

\section{Conclusion}
We introduced a shifted inverse coordinate for horizon-optimal time-varying
LQR that admits positive semidefinite state costs, and showed that it can be
absorbed into the inverse-coordinate LFT composition while retaining
$O(Nn^3)$ horizon search and, in exact arithmetic, exact recovery of the
Riccati solution. Admissibility of the shifted stage maps holds for all small
$c$ when the dynamics contract the unpenalized directions; otherwise it fails
for at most finitely many $c$ outside a degenerate case. A run-time check
certifies each horizon from the existing factorizations and reports
uncertified horizons instead of returning their costs. In our experiments,
every horizon was certified on the random and physical test problems. On
instances with an unstable, uncontrollable, unpenalized mode, long horizons
were uncertified while the optimal horizon remained certified. With the
checks enabled, the method ran at the speed of an adaptive-jitter inverse
coordinate and slower than diagonal loading.

The analysis is limited to linear time-varying LQR with positive
semidefinite costs. Certification bounds the conditioning of each pivot, not
the final error. Indefinite costs, as arise from nonconvex running costs in
DDP, additionally require $P_k\succ-cI$, which a fixed $c$ does not
guarantee. Embedding the construction in DDP for nonlinear systems, and
extending it to constrained problems~\cite{ipddp}, are left to future work.

\section*{Acknowledgement}
The authors acknowledge the use of LLMs for improving
the grammatical structure and clarity of this manuscript.


\bibliographystyle{IEEEtran}
\bibliography{references}

@inproceedings{hop,
author = {Dai, Miaomiao and Ren, Zhongqiang},
booktitle = {Robotics: Science and Systems (RSS)},
year = {2026},
title = {HOP: Fast Differential Dynamic Programming for Horizon-Optimal Trajectory Planning},
doi = {10.15607/RSS.2026.XXII.186}
}

@article{ddp,
author = {DAVID MAYNE},
title = {A Second-order Gradient Method for Determining Optimal Trajectories of Non-linear Discrete-time Systems},
journal = {International Journal of Control},
volume = {3},
number = {1},
pages = {85--95},
year = {1966},
publisher = {Taylor \& Francis},
doi = {10.1080/00207176608921369},
URL = { 
        https://doi.org/10.1080/00207176608921369
},
eprint = { 
        https://doi.org/10.1080/00207176608921369
}

}

@inproceedings{ilqr,
  title={Iterative linear quadratic regulator design for nonlinear biological movement systems},
  author={Li, Weiwei and Todorov, Emanuel},
  booktitle={First International Conference on Informatics in Control, Automation and Robotics},
  volume={2},
  pages={222--229},
  year={2004},
  organization={SciTePress}
}

@INPROCEEDINGS{op,
  author={Stachowicz, Kyle and Theodorou, Evangelos A.},
  booktitle={2022 International Conference on Robotics and Automation (ICRA)}, 
  title={Optimal-Horizon Model Predictive Control with Differential Dynamic Programming}, 
  year={2022},
  volume={},
  number={},
  pages={1440-1446},
  doi={10.1109/ICRA46639.2022.9812036}}

@book{mpc,
  author    = {Rawlings, James B. and Mayne, David Q. and Diehl, Moritz M.},
  title     = {Model Predictive Control: Theory, Computation, and Design},
  edition   = {2nd},
  publisher = {Nob Hill Publishing},
  address   = {Madison, WI},
  year      = {2017}
}

@INPROCEEDINGS{teb,
  author={Rösmann, Christoph and Hoffmann, Frank and Bertram, Torsten},
  booktitle={2015 European Control Conference (ECC)}, 
  title={Timed-Elastic-Bands for time-optimal point-to-point nonlinear model predictive control}, 
  year={2015},
  volume={},
  number={},
  pages={3352-3357},
  doi={10.1109/ECC.2015.7331052}}

@article{ELALAMI1998525,
title = {On the discrete linear quadratic minimum-time problem},
journal = {Journal of the Franklin Institute},
volume = {335},
number = {3},
pages = {525-532},
year = {1998},
issn = {0016-0032},
doi = {https://doi.org/10.1016/S0016-0032(96)00133-0},
url = {https://www.sciencedirect.com/science/article/pii/S0016003296001330},
author = {N. {El Alami} and A. Ouansafi and N. Znaidi},
}

@article{Abdelhak01052010,
author = {A. Abdelhak and M. Rachik},
title = {The linear quadratic minimum-time problem for a class of discrete systems},
journal = {Optimization},
volume = {59},
number = {4},
pages = {575--587},
year = {2010},
publisher = {Taylor \& Francis},
doi = {10.1080/02331930801954672},
URL = { 
        https://doi.org/10.1080/02331930801954672  
},
eprint = { 
        https://doi.org/10.1080/02331930801954672
}
}

@article{bellman1966dynamic,
  title={Dynamic programming},
  author={Bellman, Richard},
  journal={science},
  volume={153},
  number={3731},
  pages={34--37},
  year={1966},
  publisher={American Association for the Advancement of Science}
}

@book{kirk2004optimal,
  title={Optimal control theory: an introduction},
  author={Kirk, Donald E},
  year={2004},
  publisher={Courier Corporation}
}

@INPROCEEDINGS{ipddp,
  author={Prabhu, Siddharth and Rangarajan, Srinivas and Kothare, Mayuresh},
  booktitle={2025 American Control Conference (ACC)}, 
  title={Differential Dynamic Programming with Stagewise Equality and Inequality Constraints Using Interior Point Method}, 
  year={2025},
  volume={},
  number={},
  pages={2255-2261},
  doi={10.23919/ACC63710.2025.11108083}}

@book{woodbury1950inverting,
  title={Inverting modified matrices},
  author={Woodbury, Max A},
  year={1950},
  publisher={Department of Statistics, Princeton University}
}

@INPROCEEDINGS{sympletic,
  author={Frank, Jason and Zhuk, Sergiy},
  booktitle={53rd IEEE Conference on Decision and Control}, 
  title={Symplectic Möbius integrators for LQ optimal control problems}, 
  year={2014},
  volume={},
  number={},
  pages={6377-6382},
  doi={10.1109/CDC.2014.7040389}}

@INPROCEEDINGS{sun2023receding,
  author={Sun, Jintao and Cantoni, Michael},
  booktitle={2023 62nd IEEE Conference on Decision and Control (CDC)}, 
  title={On Receding-Horizon Approximation in Time-Varying Optimal Control}, 
  year={2023},
  volume={},
  number={},
  pages={7265-7270},
  doi={10.1109/CDC49753.2023.10383405}}

@INPROCEEDINGS{gonzalez2023adaptive,
  author={Gonzalez, Carlos and Bang, Seung Hyeon and Li, Po-han and Chinchali, Sandeep and Sentis, Luis},
  booktitle={2023 62nd IEEE Conference on Decision and Control (CDC)}, 
  title={Learning Adaptive Horizon Maps Based on Error Forecast for Model Predictive Control}, 
  year={2023},
  volume={},
  number={},
  pages={8501-8508},
  doi={10.1109/CDC49753.2023.10384131}}

@conference{sima2010structure,
author={Vasile Sima},
title={STRUCTURE-PRESERVING ALGORITHMS FOR DISCRETE-TIME ALGEBRAIC MATRIX RICCATI EQUATIONS},
booktitle={Proceedings of the 7th International Conference on Informatics in Control, Automation and Robotics - Volume 1: ICINCO},
year={2010},
pages={187-192},
publisher={SciTePress},
organization={INSTICC},
doi={10.5220/0003000101870192},
isbn={978-989-8425-00-3},
issn={2184-2809},
}

\end{document}